\documentclass[]{stripped}

\pdfoutput=1


\usepackage{hyperref}
\usepackage{amsfonts,amsmath,amsthm}
\usepackage{euscript}


\newcommand{\scr}{\EuScript}
\DeclareMathSymbol{\upset}{\mathopen}{symbols}{"22}
\DeclareMathSymbol{\downset}{\mathopen}{symbols}{"23}
\newcommand{\myF}{\scr F}
\newcommand{\myG}{\scr G}
\newcommand{\myW}{\scr W}
\newcommand{\supp}{\mathrm{supp}}
\newcommand{\upzeta}{\zeta^\uparrow}
\newcommand{\downzeta}{\zeta^\downarrow}

\newtheorem{Thm}{Theorem}
\newtheorem{Lem}[Thm]{Lemma}


\begin{document}


\title[]{The fast intersection transform\\with applications to counting paths}

\thanks{This research was supported in part by the Academy of Finland,
  Grants 117499 (P.K.) and 109101 (M.K.), and by the Swedish Research
  Council, project ``Exact Algorithms'' (A.B. and T.H.).}

\author[lu]{}{Andreas Bj\"orklund}
\address[lu]{Lund University, Department of Computer Science, P.O.Box 118, SE-22100 Lund, Sweden}

\author[itu]{}{Thore Husfeldt}
\address[itu]{IT University of Copenhagen, Rued Langgaards Vej 7, 2300
  K\o{}benhavn S, Denmark and Lund University, Department of Computer Science, P.O.Box 118, SE-22100 Lund, Sweden}

\author[he]{}{Petteri Kaski}

\author[he]{}{\mbox{Mikko Koivisto}}

\address[he]{Helsinki Institute for Information Technology HIIT,
  University of Helsinki, Department of Computer Science, P.O.Box 68,
  FI-00014 University of Helsinki, Finland\newline
    \ \newline
    \emph{E-mail addresses\/}:\ %
         {\tt andreas.bjorklund@logipard.com}, 
         {\tt thore.husfeldt@gmail.com}, \newline
    \phantom{\emph{E-mail addresses\/}:\ }%
         {\tt petteri.kaski@cs.helsinki.fi}, 
         {\tt mikko.koivisto@cs.helsinki.fi}}


\begin{abstract}
\noindent
We present an algorithm for evaluating a linear ``intersection transform'' 
of a function defined on the lattice of subsets of an $n$-element set.
In particular, the algorithm constructs an arithmetic circuit for 
evaluating the transform in ``down-closure time'' relative to the support 
of the function and the evaluation domain.
As an application, we develop an algorithm that,
given as input a digraph with $n$ vertices and bounded integer 
weights at the edges, counts paths by weight and given length 
$0\leq\ell\leq n-1$ in time $O^*(\exp(n\cdot H(\ell/(2n))))$, 
where $H(p)=-p\log p-(1-p)\log(1-p)$, and the notation $O^*(\cdot)$
suppresses a factor polynomial in $n$.
\end{abstract}

\keywords{%
algorithms and data structures, 
arithmetic circuits, 
counting, 
linear transformations, 
long paths, 
travelling salesman problem}


\maketitle


\section{Introduction}

Efficient algorithms for linear transformations,
such as the fast Fourier transform of Cooley and Tukey~\cite{CoTu65} 
and Yates' algorithm~\cite{Yate37},
are fundamental tools both in computing theory and in practical
applications. Therefore it is surprising that some arguably
elementary transformations have apparently not been investigated 
from an algorithmic perspective. 

This paper contributes by studying an ``intersection transform'' 
of functions defined on subsets of a ground set. 
In precise terms, let $U$ be a finite set with $n$ 
elements (the ground set), let $R$ be a ring, and 
denote by $2^U$ the set of all subsets of $U$.
The {\em intersection transform} maps a function
$f:2^U\rightarrow R$ to the function
$f\iota:\{0,1,\ldots,n\}\times 2^U\rightarrow R$, 
defined for all $j=0,1,\ldots,n$ and $Y\subseteq U$ by
\begin{equation}
\label{eq:intersection-transform}
f\iota_j(Y)=\sum_{\substack{X\subseteq U\\|X\cap Y|=j}} f(X).
\end{equation}

Our interest here is in particular to restrict (or ``trim'') 
the domains of the input $f$ and the output $f\iota$ from $2^U$ 
to given subsets of $2^U$.

For a subset $\myF\subseteq 2^U$, denote by $\downset\myF$ 
the {\em down-closure} of $\myF$, that is, the family of sets 
consisting of all the sets in $\myF$ and their subsets. 
The notation $O^*(\cdot)$ in what follows suppresses 
a factor polynomial in $n$.
The following theorem states our main result.

\begin{Thm}
\label{thm:fast-intersection-transform}
There exists an algorithm that, 
given $\myF\subseteq 2^U$ and $\myG\subseteq 2^U$ as input,
in time $O^*\bigl(|\downset\myF|+|\downset\myG|\bigr)$
constructs an $R$-arithmetic circuit with input gates
for $f:\myF\rightarrow R$ and output gates that
evaluate to $f\iota:\{0,1,\ldots,n\}\times\myG\rightarrow R$.
\end{Thm}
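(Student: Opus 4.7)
The plan is to reduce the intersection transform to two trimmed zeta transforms linked by a binomial change of basis. Extend $f$ by $0$ outside $\myF$ and introduce the up-zeta transform $\upzeta f(A)=\sum_{X\supseteq A}f(X)$. For $Y\in\myG$ and $j\in\{0,1,\ldots,n\}$, set
$$\tilde g_j(Y)\;=\;\sum_{A\subseteq Y,\ |A|=j}\upzeta f(A).$$
Swapping the order of summation yields
$$\tilde g_j(Y)=\sum_{X\subseteq U}\binom{|X\cap Y|}{j}f(X)=\sum_{k\geq j}\binom{k}{j}f\iota_k(Y),$$
and the upper-triangular binomial matrix inverts to give
$$f\iota_j(Y)=\sum_{k\geq j}(-1)^{k-j}\binom{k}{j}\tilde g_k(Y).$$
This reduces the task to realising the two zeta-type maps $f\mapsto\upzeta f$ and $\upzeta f\mapsto\tilde g$ on trimmed domains, followed by a fixed linear inversion.

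The circuit would then cascade three parts. Part~A realises $f\mapsto\upzeta f$, whose nonzero support is contained in $\downset\myF$. Part~B realises the cardinality-stratified down-zeta transform $\upzeta f\mapsto\tilde g$, taking inputs indexed by $A\in\downset\myG$ (zero-padded outside $\downset\myF\cap\downset\myG$) and emitting $\tilde g_j(Y)$ for every $Y\in\myG$ and $0\leq j\leq n$. Part~C applies the fixed integer matrix $\bigl((-1)^{k-j}\binom{k}{j}\bigr)_{j,k}$ at each $Y\in\myG$, contributing only a polynomial factor in $n$.

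I would build parts~A and~B as Yates-style circuits that process one element of $U$ at a time, carrying entries only for indices in the relevant down-closure. Part~B additionally maintains a length-$(n+1)$ vector of coefficients indexed by $j$, updated by a shift-and-add whenever an element is flipped (to account for $|A|$ growing by one). This gives $O(n)$ gates per element of $\downset\myF$ in part~A and $O(n^2)$ gates per element of $\downset\myG$ in part~B, so the total circuit size is $O^*(|\downset\myF|+|\downset\myG|)$; emitting the wiring takes time proportional to the size.

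The hard part will be justifying that Yates' layers can be confined to the advertised down-closures: one must argue that any intermediate partial sum indexed by a set outside $\downset\myF$ (respectively $\downset\myG$) is identically zero and therefore need not be wired. For part~A this holds because a nonzero intermediate value at $A$ witnesses a superset in $\supp f\subseteq\myF$, forcing $A\in\downset\myF$; for part~B the symmetric argument, applied on the output side, confines the indices to $\downset\myG$. Once this confinement is established, composing the three parts yields the claimed circuit.
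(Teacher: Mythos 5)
Your proposal is correct and follows essentially the same route as the paper: an up-zeta transform trimmed to $\downset\myF$, a rank-stratified down-zeta transform producing $\tilde g_j(Y)=\sum_{k}\binom{k}{j}f\iota_k(Y)$ on $\myG$, and inversion of the fixed binomial matrix at each $Y$ (the paper's Lemma~\ref{lem:ab}). The only cosmetic difference is that you fold the rank bookkeeping into a single vector-valued down-zeta pass, whereas the paper runs $n+1$ separate down-zeta transforms on the rank-$i$ slices and invokes its trimmed-zeta lemma (Lemma~\ref{lem:fzt}) for the confinement argument you sketch at the end.
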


This result supplies yet another tool aimed 
at the resolution of a long-standing open problem, 
namely that of improving upon the classical (early 1960s) dynamic 
programming algorithm for the Travelling Salesman Problem (TSP). 
With an $O^*(2^n)$ running time for an instance with $n$ cities,
the classical algorithm, due to Bellman~\cite{Bell60,Bell62}, and, 
independently, Held and Karp~\cite{HeKa62}, remains the fastest known 
exact algorithm for the TSP. Moreover, progress has been
equally stuck at $O^*(2^n)$ even if one considers the more restricted
Hamiltonian Path (HP) and the Hamiltonian Cycle (HC) problems.

Armed with Theorem~\ref{thm:fast-intersection-transform}, 
we show that the $O^*(2^n)$ bound can be broken 
in a {\em counting} context, assuming one cares only for
{\em long} paths or cycles, as opposed to the {\em spanning} paths 
or cycles required by the TSP/HP/HC. 
(See \S\ref{sect:earlier} for a contrast with earlier work.)

Denote by $H$ the binary entropy function 
\begin{equation}
\label{eq:binent}
H(p)=-p\log p-(1-p)\log\,(1-p),\qquad 0\leq p\leq 1.
\end{equation}

\begin{Thm}
\label{thm:long-simple-walks}
There exists an algorithm that, given as input
\begin{itemize}
\item[(i)]
a directed graph $D$ with $n$ vertices and bounded integer 
weights at the edges,
\item[(ii)]
two vertices, $s$ and $t$, and
\item[(iii)]
a length $\ell=0,1,\ldots,n-1$,
\end{itemize}
counts, 
by total weight,
the number of paths of length $\ell$ from $s$ to $t$ in $D$ in time
\begin{equation}
\label{eq:long-simple-walks-runtime}
O^*\biggl(\exp\biggl(H\biggl(\frac{\ell}{2n}\biggr)\cdot n\biggr)\biggr).
\end{equation}
\end{Thm}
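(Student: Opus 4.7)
The plan is to exploit that a length-$\ell$ path has only $\ell+1$ vertices and to split each path at its ``middle'' vertex, reducing the problem to a disjoint-pair sum over much smaller vertex subsets. Concretely, setting $k_1=\lfloor\ell/2\rfloor$ and $k_2=\lceil\ell/2\rceil$, every path of length $\ell$ from $s$ to $t$ decomposes uniquely (by picking its $(k_1{+}1)$-th vertex $m$) into a path of length $k_1$ from $s$ to $m$ and a path of length $k_2$ from $m$ to $t$ whose vertex sets meet precisely in $\{m\}$. To track total weight, I would work over the ring $R=\mathbb{Z}[x]/(x^N)$, assigning to each edge $e$ the monomial $x^{w(e)}$ (after an affine shift to keep exponents nonnegative), with $N$ polynomial in $n$ because the edge weights are bounded; the coefficients of the final scalar then give the desired count by weight.

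For each candidate middle vertex $m\in V$, I would compute by standard vertex-subset dynamic programming the ring-valued functions
\begin{equation*}
A_m(Y)=\sum_{P}x^{w(P)},\qquad B_m(Z)=\sum_{Q}x^{w(Q)},
\end{equation*}
where $P$ ranges over length-$k_1$ paths from $s$ to $m$ with vertex set exactly $Y$, and $Q$ over length-$k_2$ paths from $m$ to $t$ with vertex set exactly $Z$. Only sets $Y$ of size $k_1{+}1$ and $Z$ of size $k_2{+}1$ arise, so the DP touches $O^*\bigl(\sum_{i\leq k_2+1}\binom{n}{i}\bigr)$ states, each updated in polynomial time. By the standard entropy bound $\sum_{i\leq k}\binom{n}{i}\leq 2^{nH(k/n)}$ valid for $k\leq n/2$, together with the continuity of $H$, this is $O^*\bigl(\exp(H(\ell/(2n))\cdot n)\bigr)$.

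The desired count equals $\sum_{m\in V}\sum_{Y,Z}A_m(Y)B_m(Z)\,[\,Y\cap Z=\{m\}\,]$. Fixing $m$ and writing $Y=Y'\cup\{m\}$, $Z=Z'\cup\{m\}$ with $Y',Z'\subseteq V\setminus\{m\}$, the constraint becomes $Y'\cap Z'=\emptyset$. Setting $a(Y')=A_m(Y'\cup\{m\})$ and $b(Z')=B_m(Z'\cup\{m\})$ on ground set $V\setminus\{m\}$, the inner sum becomes exactly $\sum_{Y'}a(Y')\cdot b\iota_0(Y')$, since by definition $b\iota_0(Y')=\sum_{Z':\,Z'\cap Y'=\emptyset}b(Z')$. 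I would then invoke Theorem~\ref{thm:fast-intersection-transform} with $\myF=\supp b$ and $\myG=\supp a$, whose elements are subsets of $V\setminus\{m\}$ of size at most $\lceil\ell/2\rceil$, to obtain the values $b\iota_0$ on $\myG$ in time $O^*\bigl(|\downset\myF|+|\downset\myG|\bigr)=O^*\bigl(\exp(H(\ell/(2n))\cdot n)\bigr)$. Summing the resulting pointwise product and aggregating over the $n$ choices of $m$ contributes only polynomial overhead.

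The principal obstacle is matching the combinatorics of path halves ``disjoint except at $m$'' to the $j=0$ slice of the intersection transform, which the middle-vertex split and the substitution $Y=Y'\cup\{m\}$ accomplish cleanly; once this is in place, verifying that the relevant down-closure sizes are $O^*\bigl(\binom{n}{\lceil\ell/2\rceil}\bigr)=O^*\bigl(\exp(H(\ell/(2n))\cdot n)\bigr)$ is a routine entropy computation (the discrepancy between $\lceil\ell/2\rceil/n$ and $\ell/(2n)$ costs only a $\mathrm{poly}(n)$ factor via $|H'|$), and the bounded-weight assumption guarantees that every ring operation in $R$ remains polynomial in $n$.
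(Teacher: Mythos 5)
Your proof is correct and takes essentially the same approach as the paper: split each simple path at its middle vertex and glue the two halves via the intersection transform, with the down-closure size bounded by the entropy inequality. The only cosmetic difference is that you puncture the ground set at the middle vertex $m$ and use the $j=0$ slice $b\iota_0$ on $V\setminus\{m\}$, whereas the paper keeps the ground set $V$ and uses the $j=1$ slice $q_{a,t}\iota_1$ directly, observing that both halves must contain $a$ so that $|S\cap T|=1$ already forces $S\cap T=\{a\}$; both variants yield the same bound and the same circuit size.
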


\medskip
For example, Theorem~\ref{thm:long-simple-walks} implies that 
we can count in $O(1.7548^n)$ time with length $\ell=0.5n$ 
and in $O(1.999999999^n)$ time with length $\ell=0.9999n$.
For length $\ell=n-1$ the bound reduces to the classical bound $O^*(2^n)$.

We observe that counting implies, by self-reducibility, 
that we can construct examples of the paths within the same time bound.
Similarly, we can count cycles of a given length within the same bound.
However, the efficient listing (in the form of vertex supports, 
weights, and ends $s,t$) of all the paths for 
any length $\ell\gg n/2$ appears not to be possible with present tools 
in $O\bigl((2-\epsilon)^n\bigr)$ time for $\epsilon>0$ independent
of $n$. Indeed, if it were possible, we would obtain the breakthrough 
$O\bigl((2-\epsilon)^n\bigr)$ algorithm for generic TSP by starting 
the classical algorithm from the output of the listing algorithm. 

We expect Theorem~\ref{thm:fast-intersection-transform} to have
applications beyond Theorem~\ref{thm:long-simple-walks};
for example, in the context of subset query problems discussed by 
Charikar, Indyk, and Panigrahy~\cite{ChIP02}.

Given $\myF\subseteq 2^U$ and $\myG\subseteq 2^U$
as input, we can count in $O^*\bigl(|\downset\myF|+|\downset\myG|\bigr)$ 
time for each $Y\in\myG$ the number of $X\in\myF$ that intersect
$Y$ in a given number of points; in particular, for each
$Y$ we can count the number of disjoint $X$. 

By duality of disjointness and set inclusion,  
we can thus count in $O^*\bigl(|\downset\myF|+|\upset\myG|\bigr)$ 
time for each $Y\in\myG$ the number of $X\in\myF$ with $X\subseteq Y$.
Here $\upset\myG$ denotes the {\em up-closure} of $\myG$, that is, 
the family of sets consisting of all the sets in $\myG$ and their 
supersets in $2^U$.

\subsection{Further remarks and earlier work}
\label{sect:earlier}

Theorem~\ref{thm:fast-intersection-transform} has 
its roots in Yates' algorithm~\cite{Yate37} for evaluating 
the product of a vector with the $n^{\mathrm{th}}$ 
Kronecker power of a $2\times 2$ matrix. 
While Yates' algorithm is essentially optimal, 
running in $O^*(2^n)$ ring operations given an input vector 
with $2^n$ entries, in certain cases the evaluation can be  
``trimmed'', assuming one requires only sporadic entries of the 
output vector. In particular, the present authors have observed 
\cite{BHKK08a} that the zeta and Moebius transforms on $2^U$ are 
amenable to trimming (see Lemma~\ref{lem:fzt} below for a 
precise statement). 

The proof of Theorem~\ref{thm:fast-intersection-transform} relies on
a trimmed concatenation of two ``dual'' zeta transforms, 
one that depends on supersets of a set (the ``up'' transform), 
and one that depends on subsets of a set (the ``down'' transform). 
To provide a rough intuition, we first use the up-zeta transform to 
drive information about $f$ on $\myF$ ``down'' to $\downset\myF$. 
Then we use a ``ranked''~\cite{BHKK07} down-zeta transform 
to assemble information ``up'' from $\downset\myG$ to $\myG$. 
Finally, we extract the intersection transform from the information 
gathered at each $Y\in\myG$. This essentially amounts to solving 
a fixed system of $R$-linear equations at each $Y\in\myG$.

This proof strategy yet again highlights a basic theme: 
the use of fast linear transformations to distribute and assemble 
information across a domain (e.g. time, frequency, subset lattice) 
so that ``local'' computations in the domain 
(e.g. pointwise multiplication, 
      solving local systems of linear equations) 
alternated with transforms enable the extraction of a desired result 
(e.g. convolution, intersection transform). 
Compared with earlier works such as \cite{BHKK07,BHKK08a,Kenn91},
the present approach establishes the serendipity of the up/down
dual transforms and introduces the ``linear equation trick'' into 
the toolbox of local computations.

Once Theorem~\ref{thm:fast-intersection-transform} is available,
Theorem~\ref{thm:long-simple-walks} stems from 
the observation that a path can be decomposed into 
two paths, each having half the length of the original path, 
with exactly one vertex in common. 
Theorem~\ref{thm:fast-intersection-transform} then enables
us to ``glue halves'' in $\myF$ and $\myG$,
where $\downset\myF$ and $\downset\myG$ consist of
sets of size at most $\lceil\ell/2\rceil+1$. 
This prompts the observation that 
Theorem~\ref{thm:fast-intersection-transform} is 
useful only when the bound $O^*\bigl(|\downset\myF|+|\downset\myG|\bigr)$
improves upon the trivial bound $O^*\bigl(|\myF||\myG|\bigr)$
obtained by a direct iteration over all pairs $(X,Y)\in\myF\times\myG$.

We know at least one alternative way of proving 
Theorem~\ref{thm:long-simple-walks}, without using
Theorem~\ref{thm:fast-intersection-transform}. Indeed,
assuming knowledge of trimming~\cite{BHKK08a}, one can use 
an algorithm of Kennes~\cite{Kenn91} to evaluate a sum
$\sum_{|Z|=j}\sum_{X\cap Y=Z} f(X)g(Y)$ for given 
$f:\myF\rightarrow R$ and $g:\myG\rightarrow R$
in $O^*\bigl(|\downset\myF|+|\downset\myG|\bigr)$ 
ring operations
(take the trimmed up-zeta transform of $f$ and $g$, 
take pointwise product of transforms, 
take the trimmed up-Moebius transform, and
sum over all $j$-subsets in $\downset\myF\cup\downset\myG$). 
This enables one to evaluate the right-hand side of \eqref{eq:wgf-len2} 
below in time \eqref{eq:long-simple-walks-runtime}, thus giving
an alternative proof of Theorem~\ref{thm:long-simple-walks}.

To contrast Kennes' algorithm with
Theorem~\ref{thm:fast-intersection-transform},
Kennes' algorithm computes for each $Z\subseteq U$ 
the sum over pairs $(X,Y)\in\myF\times\myG$ with $Z=X\cap Y$, 
whereas \eqref{eq:intersection-transform} computes, for each $Y\in\myG$ 
the sum over $X\in\myF$ with $|X\cap Y|=j$. Thus, Kennes' algorithm
provides control over the intersection $Z$ but lacks control
over the pairs $(X,Y)$, whereas \eqref{eq:intersection-transform} 
provides control over $Y$ but lacks control over the intersection 
(except for size).

As regards the TSP/HP/HC, earlier work on exact exponential-time 
algorithms can be divided roughly into three lines of study. 
(For a broader treatment of TSP/HP/HC and exact exponential-time 
algorithms, we refer to \cite{ABCC06,GuPu02,LLRS85}, and 
\cite{Woeg03}, respectively.)

One line of study has been to restrict the input graph, whereby
a natural restriction is to place an upper bound 
$\Delta$ on the degrees of the vertices.
Eppstein~\cite{Epps07} has developed an algorithm that runs
in time $O^*(2^{n/3})=O(1.260^n)$ for $\Delta=3$ and 
in time $O(1.890^n)$ for $\Delta=4$. 
Iwama and Nakashima \cite{IwNa07} have improved the $\Delta=3$ 
case to $O(1.251^n)$, and Gebauer~\cite{Geba08} the $\Delta=4$ case
to $O(1.733^n)$. The present authors established \cite{BHKK08b}
an $O\bigl((2-\epsilon)^n\bigr)$ bound for all $\Delta$,
with $\epsilon>0$ depending on $\Delta$ but not on $n$.

A second line of study has been to ease the space requirements of the
algorithms from exponential to polynomial in $n$. 
Karp~\cite{Karp82} and, independently, Kohn, Gottlieb, and 
Kohn~\cite{KoGK77} have shown that TSP with bounded integer weights
can be solved in time $O^*(2^n)$ and space polynomial in $n$.
Combined with restrictions on the graph, one can arrive
at running times $O^*\bigl((2-\epsilon)^n\bigr)$ and polynomial space
\cite{BHKK08b,Epps07,IwNa07}.

A third line of study relaxes the requirement on spanning paths/cycles
to ``long'' paths/cycles. In this setting, a simple backtrack algorithm 
finds a path of length $\ell$ in time $O^*(n^\ell)$.
Monien~\cite{Moni85} observed that this can be expedited
to $O^*(\ell!)$ time by a dynamic programming approach.
Alon, Yuster, and Zwick~\cite{AlYZ95} introduced 
a seminal colour-coding procedure and improved the
running time to $O^*((2e)^\ell)$ expected and 
$O^*(c^\ell)$ deterministic time, $c$ a large constant.
Subsequently, combining colour-coding ideas with a divide-and-conquer 
approach, Chen, Lu, Sze, and Zhang~\cite{CLSZ07}, and, independently,
Kneis, M\"olle, Richter, and Rossmanith~\cite{KMRR06},
developed algorithms with $O^*(4^\ell)$ expected and 
$O^*(4^{\ell+o(\ell)})$ deterministic time. 
A completely different approach was taken by Koutis~\cite{Kout08}, 
who presented an $O^*(2^{3\ell/2})$ expected time algorithm 
relying on a randomised technique for detecting whether a given 
$n$-variate polynomial, represented as an arithmetic circuit with only 
sum and product gates, has a square-free monomial of degree $\ell$ with 
an odd coefficient. Recently, Williams~\cite{Will08} extended
Koutis' technique and obtained an $O^*(2^\ell)$ expected time algorithm.

To contrast with Theorem~\ref{thm:long-simple-walks}, 
while the $O^*(2^\ell)$ bound of the Koutis--Williams~\cite{Kout08,Will08}
algorithm is superior to the bound \eqref{eq:long-simple-walks-runtime}
in Theorem~\ref{thm:long-simple-walks}, it is not immediate whether 
the Koutis--Williams approach extends to counting problems. 
Furthermore, it appears challenging to derandomise the Koutis--Williams
algorithm without increasing the running time (see~\cite[p.~6]{Will08}),
whereas the algorithm in Theorem~\ref{thm:long-simple-walks} is 
deterministic.

\section{The fast intersection transform}

\subsection{Preliminaries}
\label{sect:trimming-preliminaries}

For a logical proposition $P$, we use Iverson's bracket notation
$[P]$ to denote a $1$ if $P$ is true, and a $0$ if $P$ is false.

Let $\myF\subseteq 2^U$ and $f:\myF\rightarrow R$.

Define the {\em up-zeta transform} $f\upzeta$
for all $Y\subseteq U$ by
\begin{equation}
\label{eq:up-zeta}
f\upzeta(Y)=\sum_{\substack{X\in\myF\\Y\subseteq X}} f(X)\,.
\end{equation}

Define the {\em down-zeta transform} $f\downzeta$
for all $Y\subseteq U$ by
\begin{equation}
\label{eq:down-zeta}
f\downzeta(Y)=
\sum_{\substack{X\in\myF\\X\subseteq Y}} f(X)\,.
\end{equation}

The following lemma condenses the essential properties of 
the ``trimmed'' fast zeta transform \cite{BHKK08a}.

\begin{Lem} 
\label{lem:fzt}
There exist algorithms that construct, 
given $\myF\subseteq 2^U$ and $\myG\subseteq 2^U$ as input,
an $R$-arithmetic circuit with input gates for 
$f:\myF\rightarrow R$ and output gates that evaluate to
\begin{enumerate}
\item
$f\upzeta:\myG\rightarrow R$, 
with construction time $O^*\bigl(|\myF|+|\upset\myG|\bigr)$;
\item
$f\upzeta:\myG\rightarrow R$,
with construction time $O^*\bigl(|\downset\myF|+|\myG|\bigr)$;
\item
$f\downzeta:\myG\rightarrow R$,
with construction time $O^*\bigl(|\myF|+|\downset\myG|\bigr)$; and
\item
$f\downzeta:\myG\rightarrow R$,
with construction time $O^*\bigl(|\upset\myF|+|\myG|\bigr)$.
\end{enumerate}
\end{Lem}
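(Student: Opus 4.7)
My plan is a uniform adaptation of Yates's algorithm, where the only degree of freedom is the intermediate domain $\scr D\subseteq 2^U$ maintained during the recursion. Recall that Yates's scheme for the untrimmed up-zeta transform orders the ground set as $1,\ldots,n$ and builds tables $g_0=f,g_1,\ldots,g_n=f\upzeta$ by the rule $g_i(Y)=g_{i-1}(Y)$ for $i\in Y$ and $g_i(Y)=g_{i-1}(Y)+g_{i-1}(Y\cup\{i\})$ for $i\notin Y$; each of the $n$ layers unfolds naturally into a layer of addition gates. Parts~(3) and~(4) will follow from parts~(1) and~(2) via the complementation bijection $X\mapsto U\setminus X$ on $2^U$, which converts $\upzeta$ into $\downzeta$ and swaps the up-closure of a family with the down-closure of its image, so I would focus on the up-zeta parts.

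For part~(1) I would take $\scr D=\upset\myG$. This family is up-closed, so the Yates step never leaves $\scr D$: whenever $Y\in\scr D$ and $i\notin Y$, also $Y\cup\{i\}\in\scr D$. The initial layer supplies the input gate $f(X)$ for $X\in\myF\cap\scr D$ and a constant zero for $X\in\scr D\setminus\myF$; any $X\in\myF\setminus\scr D$ lies above no $Y\in\myG$ and so contributes to no desired output. Wiring $n$ Yates layers on $\scr D$ and reading off the values at $Y\in\myG\subseteq\scr D$ yields a circuit of size $O^*(|\myF|+|\upset\myG|)$.

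For part~(2) I would take $\scr D=\downset\myF$. This is the main step of the argument: $\downset\myF$ is \emph{not} closed under the Yates step, so the recursion may request $g_{i-1}(Y\cup\{i\})$ with $Y\in\scr D$ and $Y\cup\{i\}\notin\scr D$, and I would wire any such reference to a constant zero. Correctness rests on the observation that if $Z\notin\downset\myF$ then no $X\in\myF$ contains $Z$, and the same holds for every superset of $Z$; an easy induction on $i$ then shows that the full Yates table would satisfy $g_i(Z)=0$ for every $Z\notin\downset\myF$. Consequently the pruned recursion computes $g_n(Y)=f\upzeta(Y)$ correctly for each $Y\in\downset\myF$, while for $Y\in\myG\setminus\downset\myF$ the output is a constant zero gate, since by the same reasoning $f\upzeta(Y)=0$. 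The resulting circuit has size $O^*(|\downset\myF|+|\myG|)$.

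The only genuine obstacle I foresee is the conceptual step in part~(2): recognising that the ``wrong-direction'' domain $\downset\myF$ still suffices, once one observes that the support of $f$ forces every out-of-domain intermediate value to vanish. Everything else---fixing the element ordering, hashing $\scr D$ together with its gate labels so that the lookup $Y\cup\{i\}\mapsto\text{gate}$ runs in polynomial time per query, and emitting the $n$ Yates layers---is standard bookkeeping that is absorbed by the $O^*$ notation.
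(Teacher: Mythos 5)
Your proof is correct. The paper itself does not prove this lemma---it imports it from the cited reference on trimmed Moebius inversion \cite{BHKK08a}---and your argument (Yates layers restricted to an intermediate domain $\scr D$, with $\scr D=\upset\myG$ closed under the recursion for part~(1), out-of-domain references zeroed out for $\scr D=\downset\myF$ in part~(2) because the support of $f$ forces those table entries to vanish, and complementation for parts~(3) and~(4)) is exactly the standard trimming argument from that reference.
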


\subsection{The inverse of truncated Pascal's triangle}

We work with the standard extension of the binomial
coefficients to arbitrary integers 
(see Graham, Knuth, and Patashnik~\cite{GrKP94}). 
For integers $p$ and $q$, we let
\begin{equation}
\label{eq:binom}
\binom{p}{q}=
\begin{cases}
\prod_{k=1}^q\frac{p+1-k}{k} & \text{if $q>0$;}\\
1                            & \text{if $q=0$;}\\
0                            & \text{if $q<0$.}
\end{cases}
\end{equation}

The following lemma is folklore, 
but we recall a proof here for convenience of exposition.

\begin{Lem}
\label{lem:ab}
The integer matrices $A$ and $B$ with entries
\begin{equation}
\label{eq:ab}
a_{ij}=\binom{j}{i},
\qquad
b_{ij}=(-1)^{i+j}\binom{j}{i},
\qquad i,j=0,1,\ldots,n
\end{equation}
are mutual inverses.
\end{Lem}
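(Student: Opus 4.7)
The plan is to verify directly that $BA=I$, after which $AB=I$ follows since both matrices are square of size $(n+1)\times(n+1)$. Computing
\[
(BA)_{ik}=\sum_{j=0}^n (-1)^{i+j}\binom{j}{i}\binom{k}{j},
\]
I first apply the ``trinomial revision'' identity $\binom{k}{j}\binom{j}{i}=\binom{k}{i}\binom{k-i}{j-i}$, which one checks by expanding in factorials when $0\le i\le j\le k$, and which holds trivially (both sides zero) for $j<i$ or $j>k$ under the convention \eqref{eq:binom}.

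Pulling $\binom{k}{i}$ outside the sum and substituting $m=j-i$, I reduce to
\[
(BA)_{ik}=\binom{k}{i}\sum_{m}(-1)^{m}\binom{k-i}{m}.
\]
The alternating sum on the right is $(1-1)^{k-i}$ by the binomial theorem when $k\ge i$, hence equals $0$ for $k>i$ and $1$ for $k=i$. When $k<i$, the factor $\binom{k}{i}$ vanishes. In all cases $(BA)_{ik}=[i=k]$, which is the desired Kronecker delta.

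The only mildly subtle point is ensuring the bookkeeping is compatible with the extended convention \eqref{eq:binom}, so that all sums may be written over $j\in\{0,1,\ldots,n\}$ (or even $j\in\mathbb{Z}$) without auxiliary cases; once that is checked, the proof is a one-line calculation. I would also remark that one may instead compute $AB$ directly by an analogous manipulation, but since $A$ and $B$ are square it suffices to verify one of the two products.
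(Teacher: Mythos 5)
Your proof is correct and follows essentially the same route as the paper's: the trinomial revision identity $\binom{k}{j}\binom{j}{i}=\binom{k}{i}\binom{k-i}{j-i}$ followed by the alternating binomial sum $(1-1)^{k-i}$, with the same case bookkeeping under the convention \eqref{eq:binom}. The only cosmetic difference is that you verify $BA=I$ and invoke the fact that a one-sided inverse of a square integer matrix is two-sided, whereas the paper simply carries out the nearly identical computation for $AB$ as well.
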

\begin{proof}
Let us first consider the $(i,j)$-entry of $AB$:
\[
\begin{split}
\sum_{k=0}^n a_{ik}b_{kj}
&=(-1)^j\sum_{k=0}^n\binom{k}{i}\binom{j}{k}(-1)^k\\
&=(-1)^j\sum_{k=i}^j\binom{k}{i}\binom{j}{k}(-1)^k\\
&=(-1)^{i+j}\binom{j}{i}\sum_{k=i}^j(-1)^{k-i}\binom{j-i}{k-i}\\
&=[i=j].
\end{split}
\]
Here the second equality follows by observing that $j\geq 0$
implies $\binom{j}{k}=0$ for all $k>j$; similarly, $k\geq 0$
implies $\binom{k}{i}=0$ for all $0\leq k<i$.
The third equality follows from an application of the 
identity $\binom{p}{q}\binom{q}{r}=\binom{p}{r}\binom{p-r}{q-r}$,
valid for all integers $p,q,r$ (see \cite[Equation 5.21]{GrKP94}).
The last equality follows from an application of the Binomial Theorem.

The analysis for the $(i,j)$-entry of $BA$ is similar:
\[
\begin{split}
\sum_{k=0}^n b_{ik}a_{kj}
&=(-1)^i\sum_{k=0}^n\binom{k}{i}\binom{j}{k}(-1)^k\\
&=(-1)^i\sum_{k=i}^j\binom{k}{i}\binom{j}{k}(-1)^k\\
&=(-1)^{i+i}\binom{j}{i}\sum_{k=i}^j(-1)^{k-i}\binom{j-i}{k-i}\\
&=[i=j].
\end{split}
\]
\end{proof}

It follows from Lemma~\ref{lem:ab} that the matrices $A$ and $B$ 
are mutual inverses over an arbitrary ring $R$, where the entries of 
the matrices are understood to be embedded into $R$ via the natural 
ring homomorphism $z\mapsto z_R=z\cdot 1_R$, where $1_R$ is the 
multiplicative identity element of $R$, and $z$ is an integer.

\subsection{Proof of Theorem~\ref{thm:fast-intersection-transform}}

We first describe the algorithm and then prove its correctness.
All arithmetic in the evaluations, and all derivations in 
subsequent proofs, are carried out in the ring $R$. 

Let $\myF\subseteq 2^U$ and $\myG\subseteq 2^U$ be given as input 
to the algorithm. The circuit is a sequence of three ``modules'' 
starting at the input gates for $f:\myF\rightarrow R$.

{\em $1$.~Up-transform.}
Evaluate the up-zeta transform
\begin{equation}
\label{eq:g-def}
g=f\upzeta\ \text{on $\downset\myF$}
\end{equation}
with a circuit of size $O^*\bigl(|\downset\myF|\bigr)$ using
Lemma~\ref{lem:fzt}(1). Observe that \eqref{eq:up-zeta}
implies that all nonzero values of $f\upzeta$ are in $\downset\myF$.

{\em $2$.~Down-transform by rank.}
For each $i=0,1,\ldots,n$, evaluate 
$g^{(i)}$, the component of $g$ with rank $i$, on $\downset\myF$;
that is, for all $X\in\downset\myF$, set
\begin{equation}
\label{eq:g-rank}
g^{(i)}(X)=\begin{cases}
g(X) & \text{if $|X|=i$;}\\
0    & \text{otherwise.}
\end{cases}
\end{equation}
Then, for each $i=0,1,\ldots,n$, evaluate 
\begin{equation}
\label{eq:yi-def}
y_i=g^{(i)}\downzeta\ \text{on $\myG$}
\end{equation}
with a circuit of size 
$O^*\bigl(|\downset\myF|+|\downset\myG|\bigr)$ using Lemma~\ref{lem:fzt}(3).

{\em $3$.~Recover the intersection transform.}
Let $B_R$ be the matrix in Lemma~\ref{lem:ab} with entries embedded to $R$.
Associate with each $Y\in\myG$ the column vector 
\[
y(Y)=\bigl(y_0(Y),y_1(Y),\ldots,y_n(Y)\bigr)^T.
\]
For each $Y\in\myG$, evaluate the column vector 
\[
x(Y)=\bigl(x_0(Y),x_1(Y),\ldots,x_n(Y)\bigr)^T
\]
as the matrix--vector product
\begin{equation}
\label{eq:x-def}
x(Y)=B_R\,y(Y).
\end{equation}
Because the matrix $B_R$ is fixed, this can be implemented with 
$O^*\bigl(|\myG|\bigr)$ fixed $R$-arithmetic gates.

The circuit thus consists of 
$O^*\bigl(|\downset\myF|+|\downset\myG|\bigr)$ 
$R$-arithmetic gates. It remains to show that the circuit
actually evaluates the intersection transform of $f$.

\begin{Lem}
For all $Y\in\myG$ and $j=0,1,\ldots,n$ it holds that $x_j(Y)=f\iota_j(Y)$.
\end{Lem}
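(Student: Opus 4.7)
The plan is to compute $y_i(Y)$ explicitly, recognize that the resulting expression is a linear combination of the desired intersection transform values $f\iota_k(Y)$ whose coefficients are precisely the entries of the matrix $A$ from Lemma~\ref{lem:ab}, and then invoke that lemma to finish.

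First I would unfold the definitions. By \eqref{eq:g-def} and \eqref{eq:up-zeta}, for every $X\subseteq U$ we have $g(X)=\sum_{X'\in\myF,\ X\subseteq X'} f(X')$; in particular $g$ vanishes outside $\downset\myF$, so the restriction of the circuit to $\downset\myF$ loses no information. Then by \eqref{eq:g-rank}--\eqref{eq:yi-def} and \eqref{eq:down-zeta},
\[
y_i(Y)=\sum_{\substack{X\subseteq Y\\|X|=i}}g(X)
=\sum_{\substack{X\subseteq Y\\|X|=i}}\sum_{\substack{X'\in\myF\\X\subseteq X'}}f(X').
\]
Swapping the two summations and observing that the inner condition ``$X\subseteq Y$ and $X\subseteq X'$'' is equivalent to ``$X\subseteq X'\cap Y$'', the number of such sets $X$ of size $i$ is exactly $\binom{|X'\cap Y|}{i}$. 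Hence
\[
y_i(Y)=\sum_{X'\in\myF}\binom{|X'\cap Y|}{i}f(X')
=\sum_{k=0}^n\binom{k}{i}\sum_{\substack{X'\in\myF\\|X'\cap Y|=k}}f(X')
=\sum_{k=0}^n a_{ik}\,f\iota_k(Y),
\]
where the middle equality groups terms by the value $k=|X'\cap Y|\in\{0,1,\ldots,n\}$, and the last equality uses \eqref{eq:intersection-transform} together with the fact that $f$ is supported on $\myF$.

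In matrix form this reads $y(Y)=A_R\,f\iota(Y)$, where $f\iota(Y)=(f\iota_0(Y),\ldots,f\iota_n(Y))^T$ and $A_R$ denotes the matrix $A$ from Lemma~\ref{lem:ab} with entries embedded into $R$. Combining with \eqref{eq:x-def} and Lemma~\ref{lem:ab}, which gives $B_R A_R=I$ over $R$, we conclude $x(Y)=B_R\,y(Y)=B_R A_R\,f\iota(Y)=f\iota(Y)$, as required.

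There is no real obstacle beyond bookkeeping; the only point deserving care is the swap of summations and the combinatorial identification of $\binom{|X'\cap Y|}{i}$, which is what forces the ranked down-transform in step~2 and, in turn, makes the matrix $A$ in Lemma~\ref{lem:ab} the correct object to invert.
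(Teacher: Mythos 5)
Your proposal is correct and follows essentially the same route as the paper: unfold the definitions to show $y_i(Y)=\sum_k \binom{k}{i}f\iota_k(Y)=\sum_k a_{ik}f\iota_k(Y)$ via the count of $i$-subsets of $X'\cap Y$, then apply $B_R$ and invoke Lemma~\ref{lem:ab}. The one point you handle slightly differently --- noting explicitly that $g$ vanishes outside $\downset\myF$ so the restriction loses nothing --- is also present in the paper as a remark after step~1.
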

\begin{proof}
Let $Y\in\myG$ and $i=0,1,\ldots,n$. 
Consider the following derivation:
\begin{equation}
\label{eq:ybf}
\begin{aligned}
y_i(Y)&=\sum_{\substack{Z\subseteq Y\\|Z|=i}}
          \sum_{\substack{X\in\myF\\Z\subseteq X}} f(X)
     \\
     &= \sum_{X\in\myF} f(X) 
          \sum_{\substack{Z\subseteq X\cap Y\\|Z|=i}} 1_R
     \\
     &= \sum_{X\in\myF}\, \binom{|X\cap Y|}{i}_{\!\!R}\, f(X) 
     \\
     &= \sum_{j=0}^n \,\,\binom{j}{i}_{\!\!R}\!
           \sum_{\substack{X\in\myF\\|X\cap Y|=j}} f(X) 
     \\
     &= \sum_{j=0}^n \,\bigl(a_{ij}\bigr)_{\!R} \,f\iota_j(Y).
\end{aligned}
\end{equation}
Here the first equality expands the definitions 
\eqref{eq:yi-def}, \eqref{eq:down-zeta}, \eqref{eq:g-rank}, 
\eqref{eq:g-def}, and \eqref{eq:up-zeta}.
The second equality follows by changing the order of summation
and observing that $Z\subseteq X\cap Y$ if and only if
both $Z\subseteq X$ and $Z\subseteq Y$.
The fourth equality follows by collecting the terms with
$|X\cap Y|=j$ together.
The last equality follows from \eqref{eq:ab} and 
\eqref{eq:intersection-transform}. 

Now let $j=0,1,\ldots,n$, and observe that 
\eqref{eq:x-def}, \eqref{eq:ybf}, and Lemma~\ref{lem:ab} imply
\[
\begin{split}
x_j(Y)
&=\sum_{i=0}^n \bigl(b_{ji}\bigr)_{\!R}\,y_i(Y)
\\
&=\sum_{i=0}^n \bigl(b_{ji}\bigr)_{\!R}
    \sum_{k=0}^n \bigl(a_{ik}\bigr)_{\!R}\, f\iota_k(Y)
\\
&=\sum_{k=0}^n 
    \biggl(\sum_{i=0}^n\, \bigl(b_{ji}\bigr)_{\!R}\, 
                          \bigl(a_{ik}\bigr)_{\!R}\biggr)
       f\iota_k(Y)
\\
&=\sum_{k=0}^n\,
    \biggl(\sum_{i=0}^n b_{ji}a_{ik}\biggr)_{\!\!R}\,
      f\iota_k(Y)
\\
&=\sum_{k=0}^n \,\bigl[j=k\bigr]_R\,f\iota_k(Y)
\\
&=f\iota_j(Y).
\end{split}
\]
\end{proof}

\section{Counting paths}

\subsection{Preliminaries}

We require some preliminaries before proceeding with the 
proof of Theorem~\ref{thm:long-simple-walks}.
For basic graph-theoretic terminology we refer to West~\cite{West01}.

Let $D$ be an $n$-vertex digraph with vertex set $V$ and edge set $E$,
possibly with loops and parallel edges. (However, to avoid
further technicalities in the bound \eqref{eq:long-simple-walks-runtime},
we assume that the number of edges in $D$ is bounded from above by 
a polynomial in $n$.)
Associated with each edge $e\in E$ is a {\em weight} 
$w(e)\in \{0,1,\ldots\}$. 
For an edge $e\in E$, denote by $e^-$ (respectively, $e^+$)
the start vertex (respectively, the end vertex) of $e$.

It is convenient to work with the terminology of walks
instead of paths.
A {\em walk} of {\em length} $\ell$ in $D$ is a tuple
$W=(v_0,e_1,v_1,e_2,v_2,\ldots,v_{\ell-1},e_\ell,v_\ell)$
such that $v_0,v_1,\ldots,v_\ell\in V$,
$e_1,e_2,\ldots,e_\ell\in E$, and, for each $i=1,2,\ldots,\ell$,
it holds that $e_i^{-}=v_{i-1}$ and $e_i^{+}=v_i$.
The walk $W$ is said to be {\em from} $v_0$ {\em to} $v_\ell$. 

A walk is {\em simple} if $v_0,v_1,\ldots,v_\ell$ are distinct vertices.
The set of distinct vertices occurring in a walk is the \emph{support} 
of the walk. We denote the support of a walk $W$ by $\supp(W)$.
The {\em weight} of a walk $W$ is the sum of the weights of the edges 
in the walk; a walk with no edges has zero weight.
We write $w(W)$ for the weight of $W$.

For $s,t\in V$ and $S\subseteq V$ we denote by
$\myW_{s,t}(S)$ the set of all simple walks 
from $s$ to $t$ with support $S$. Observe that
$\myW_{s,t}(S)$ is empty unless both $s\in S$ and $t\in S$.

Let $z$ be a polynomial indeterminate, and 
define an associated polynomial generating function by
\begin{equation}
\label{eq:wgf-supp}
f_{s,t}(S)=\sum_{W\in\myW_{s,t}(S)}z^{w(W)}.
\end{equation}
Put otherwise, the coefficient of each monomial $z^w$ 
of $f_{s,t}(S)$ enumerates the simple walks from 
$s$ to $t$ with support $S$ and weight $w$.

For $k=0,1,\ldots,n$, denote by $\binom{V}{k}$ the
set of all $k$-subsets of $V$.

For $\ell=0,1,\ldots,n-1$, define 
a polynomial generating function by 
\begin{equation}
\label{eq:wgf-len}
g_{s,t}(\ell)=\sum_{J\in\binom{V}{\ell+1}}f_{s,t}(J).
\end{equation}
Put otherwise, the coefficient of each
monomial $z^w$ of $g_{s,t}(\ell)$ enumerates the
simple walks from $s$ to $t$ with length $\ell$ and weight $w$.

\subsection{Proof of Theorem~\ref{thm:long-simple-walks}}

Let $B\in\{0,1,\ldots\}$ be fixed.
Let $D$ be a digraph with $n$ vertices and
edge weights $w(e)\in\{0,1,\ldots,B\}$ for all $e\in E$.
Let $s,t\in V$. Let $\ell=0,1,\ldots,n-1$.

With the objective of eventually applying 
Theorem~\ref{thm:fast-intersection-transform},
let $U=V$ and let $R$ be the univariate polynomial ring over $z$ 
with integer coefficients.

To compute $g_{s,t}(\ell)$, proceed as follows. 
First observe that the generating polynomials
\eqref{eq:wgf-supp} can be computed by the following recursion
on subsets of $V$.
The singleton sets $\{s\}\subseteq V$, $s\in V$,
form the base case of the recursion:
\begin{equation}
\label{eq:wgf-recbase}
f_{s,s}\bigl(\{s\}\bigr)=1.
\end{equation}
The recursive step is defined for all 
$s,t\in V$ and $S\subseteq V$, $|S|\geq 2$, by
\begin{equation}
\label{eq:wgf-recstep}
f_{s,t}(S)=\sum_{a\in S\setminus\{t\}}
            f_{s,a}\bigl(S\setminus\{t\}\bigr)
            \biggl(\sum_{\substack{e\in E\\e^-=a\\e^+=t}}z^{w(e)}\biggr).
\end{equation}

Now, using \eqref{eq:wgf-recbase} and \eqref{eq:wgf-recstep}, 
evaluate
\begin{equation}
\label{eq:psa-def}
p_{s,a}=f_{s,a}\ \text{on $\tbinom{V}{\lfloor \ell/2\rfloor+1}$}
\end{equation}
for each $a\in V$.
Then, using \eqref{eq:wgf-recbase} and 
\eqref{eq:wgf-recstep} again,
evaluate 
\begin{equation}
\label{eq:qat-def}
q_{a,t}=f_{a,t}\ \text{on $\tbinom{V}{\lceil \ell/2\rceil+1}$}.
\end{equation}
Next, using the algorithm in Theorem~\ref{thm:fast-intersection-transform}
with $\myF=\tbinom{V}{\lceil \ell/2\rceil+1}$ and 
$\myG=\tbinom{V}{\lfloor \ell/2\rfloor+1}$, evaluate 
\begin{equation}
\label{eq:rat-def}
r_{a,t}=q_{a,t}\iota_1\ \text{on $\tbinom{V}{\lfloor \ell/2\rfloor+1}$}
\end{equation}
Finally, evaluate the right-hand side of
\begin{equation}
\label{eq:wgf-len2}
 g_{s,t}(\ell)=
  \sum_{a\in V}
      \sum_{S\in\binom{V}{\lfloor \ell/2\rfloor+1}} 
        p_{s,a}(S)r_{a,t}(S)
\end{equation}
by direct summation. 

The entire evaluation can thus be carried out with an $R$-arithmetic
circuit of size
\begin{equation}
\label{eq:sizetime}
O^*\bigl(\big|\downset\tbinom{V}{\lceil \ell/2\rceil+1}\big|+
         \big|\downset\tbinom{V}{\lfloor \ell/2\rfloor+1}\big|\bigr)
\end{equation}
that can be constructed in similar time.

To justify the equality in \eqref{eq:wgf-len2},
consider the following derivation:
\[
\begin{split}
  \sum_{a\in V}&
    \sum_{S\in\binom{V}{\lfloor \ell/2\rfloor+1}} 
      p_{s,a}(S)r_{a,t}(S)\\
&=\sum_{a\in V}
    \sum_{S\in\binom{V}{\lfloor \ell/2\rfloor+1}}
      f_{s,a}(S)
    \sum_{\substack{T\in\binom{V}{\lceil\ell/2\rceil+1}\\|S\cap T|=1}} 
      f_{a,t}(T)\\
&=\sum_{a\in V}
    \sum_{S\in\binom{V}{\lfloor \ell/2\rfloor+1}}
      \sum_{\substack{T\in\binom{V}{\lceil\ell/2\rceil+1}\\|S\cap T|=1}} 
        \sum_{W_{sa}\in\myW_{s,a}(S)}
          \sum_{W_{at}\in\myW_{a,t}(T)} 
            z^{w(W_{sa})+w(W_{at})}\\
&=\sum_{a\in V}
    \sum_{S\in\binom{V}{\lfloor \ell/2\rfloor+1}}
      \sum_{\substack{T\in\binom{V}{\lceil\ell/2\rceil+1}\\S\cap T=\{a\}}} 
        \sum_{W_{sa}\in\myW_{s,a}(S)}
          \sum_{W_{at}\in\myW_{a,t}(T)} 
            z^{w(W_{sa})+w(W_{at})}\\
&=\sum_{J\in\binom{V}{\ell+1}}
    \sum_{W\in\myW_{s,t}(J)} z^{w(W)}\\
&=g_{s,t}(\ell).
\end{split}
\]
Here the first two equalities expand 
\eqref{eq:psa-def}, \eqref{eq:rat-def}, \eqref{eq:intersection-transform}, 
\eqref{eq:qat-def}, and \eqref{eq:wgf-supp}.
The third equality follows by observing that 
$\myW_{s,a}(S)$ and 
$\myW_{a,t}(T)$ are both nonempty only if
$a\in S$ and $a\in T$. Thus, $|S\cap T|=1$ implies that 
only terms with $S\cap T=\{a\}$ appear in the sum.
The fourth equality is justified as follows. 
First observe that an arbitrary walk $W$ of length $\ell$ 
from $s$ to $t$ has the property that 
there exists a $J\in\binom{V}{\ell+1}$ with $\supp(W)=J$ 
if and only if the walk is simple.
Moreover, a simple walk $W$ of length $\ell$ from 
$s$ to $t$ has a bijective decomposition $W\mapsto (W_{sa},W_{at})$ 
into two simple subwalks,
$W_{sa}$ and $W_{at}$, with $\supp(W_{sa})\cap\supp(W_{at})=\{a\}$
for some $a\in V$.
Indeed, $W_{sa}$ is the length-$\lfloor\ell/2\rfloor$
prefix of $W$ from $s$ to some $a\in V$, 
and $W_{at}$ is the length-$\lceil\ell/2\rceil$
suffix of $W$ from $a$ to $t$.
Conversely, prepend $W_{sa}$ to $W_{at}$, 
deleting one occurrence of $a$ in the process, to get $W$.
The fifth equality follows from \eqref{eq:wgf-len} and \eqref{eq:wgf-supp}.

It remains to analyse the total running time of constructing
and evaluating the circuit in terms of $n$ and $\ell$.

Because $B$ is fixed, all the ring operations are carried out
on polynomials of degree at most $Bn=O(n)$. 
Moreover, denoting by $m$ the number of edges in $D$, 
the coefficients in the polynomials are integers bounded
in absolute value by 
$2^m2^{5n}$, where $2^m$ is an upper bound for 
the coefficients in \eqref{eq:wgf-supp} and \eqref{eq:wgf-len}, 
and $2^{5n}$ is an upper bound for the expansion in intermediate 
values in the transforms. (Both bounds are far from tight.)
Recalling that we assume that $m$ is bounded from above by a polynomial 
in $n$, we have that the coefficients can be represented 
using a number of bits that is bounded from above by a polynomial in $n$. 
It follows that each ring operation runs in time bounded from above by a 
polynomial in $n$. 

To conclude that the algorithm runs within the claimed upper bound
\eqref{eq:long-simple-walks-runtime}, combine \eqref{eq:sizetime}
with the observation that for every $0< p\leq 1/2$ it holds that
\begin{equation}
\label{eq:binsum-entropy}
\sum_{k=0}^{\lfloor np\rfloor}\binom{n}{k}\leq \exp\bigl(H(p)\cdot n\bigr)
\end{equation}
where $H$ is the binary entropy function \eqref{eq:binent}.
(For a proof of \eqref{eq:binsum-entropy}, 
see Jukna~\cite[p.~283]{Jukn01}.)



\begin{thebibliography}{XX}

\bibitem{AlYZ95}
  N.~Alon, R.~Yuster, U.~Zwick,
  Color-coding, {\em J.\ Assoc.\ Comput.\ Mach.}\ 42 (1995), 844--856.

\bibitem{ABCC06}
  D.~L.~Applegate, R.~E.~Bixby, V.~Chv\'atal, W.~J.~Cook,
  {\em The Traveling Salesman Problem: A Computational Study},
  Princeton University Press, 2006.

\bibitem{Bell60}
  R.~Bellman, Combinatorial processes and dynamic programming,
  Combinatorial Analysis,
  Proceedings of Symposia in Applied Mathematics 10, 
  American Mathematical Society, 1960, pp.~217--249. 

\bibitem{Bell62}
  R.~Bellman, Dynamic programming treatment of the travelling 
  salesman problem, {\em J.\ Assoc.\ Comput.\ Mach.}\ 9 (1962), 61--63.

\bibitem{BHKK07}
  A.~Bj\"orklund, T.~Husfeldt, P.~Kaski, M.~Koivisto,  
  Fourier meets M\"obius: fast subset convolution,
  39th Annual ACM Symposium on Theory of Computing (STOC 2007), 
  ACM, 2007, pp.~67–-74.

\bibitem{BHKK08a} 
  A.~Bj\"orklund, T.~Husfeldt, P.~Kaski, M.~Koivisto,  
  Trimmed Moebius inversion and graphs of bounded degree,
  25th International Symposium on
  Theoretical Aspects of Computer Science (STACS 2008), 
  Dagstuhl Seminar Proceedings 08001, 
  IBFI Schloss Dagstuhl, 2008, pp.~85--96.

\bibitem{BHKK08b} 
  A.~Bj\"orklund, T.~Husfeldt, P.~Kaski, M.~Koivisto,
  The travelling salesman problem in bounded degree graphs,  
  35th International Colloquium on Automata, Languages and Programming 
  (ICALP 2008), Part I, LNCS~5125, Springer, 2008, pp.~198–-209.

\bibitem{ChIP02}
  M.~Charikar, P.~Indyk, R.~Panigrahy,
  New algorithms for subset query, partial match, orthogonal range
  searching, and related problems,
  29th International Colloquium on Automata, Languages and Programming 
  (ICALP 2002), Part I, LNCS~2380, Springer, 2002, pp.~451–-462.

\bibitem{CLSZ07}
  J.~Chen, S.~Lu, S.~Sze, F.~Zhang, Improved algorithms for path, 
  matching, and packing problems, 18th Annual ACM-SIAM Symposium on 
  Discrete Algorithms (SODA 2007), SIAM, 2007, pp.~298--307.

\bibitem{CoTu65}
  J.~W.~Cooley, J.~W.~Tukey, 
  An algorithm for the machine calculation of complex Fourier series,
  {\em Math. Comp.}\ 19 (1965), 297--301. 

\bibitem{Epps07}
  D.~Eppstein, The traveling salesman problem for cubic graphs,
  {\em J.\ Graph Algorithms Appl.}\ 11 (2007), 61--81.

\bibitem{Geba08}
  H.~Gebauer, 
  On the number of Hamilton cycles in bounded degree graphs,
  4th Workshop on Analytic Algorithms and Combinatorics 
  (ANALCO 2008), SIAM, 2008. 

\bibitem{GrKP94} 
  R.~L.~Graham, D.~E.~Knuth, O.~Patashnik, {\em Concrete Mathematics}, 
  2nd ed., Addison--Wesley, 1994.

\bibitem{GuPu02}
  G.~Gutin, A.~P.~Punnen~(Eds.),
  {\em The Traveling Salesman Problem and its Variations},
  Kluwer, 2002.

\bibitem{HeKa62}
  M.~Held, R.~M.~Karp, A dynamic programming approach to sequencing 
  problems, {\em J.\ Soc.\ Indust.\ Appl.\ Math.}\ 10 (1962), 196--210.

\bibitem{IwNa07} K.~Iwama, T.~Nakashima, An improved exact
  algorithm for cubic graph TSP, 
  13th Annual International Conference on Computing and Combinatorics
  (COCOON 2007), LNCS~4598, Springer, 2007, pp.~108--117. 

\bibitem{Jukn01} S.~Jukna, {\em Extremal Combinatorics},
  Springer, 2001.

\bibitem{Karp82} 
  R.~M.~Karp, Dynamic programming meets the principle
  of inclusion and exclusion. {\em Oper.\ Res.\ Lett.}\ 1 (1982), 49--51.

\bibitem{Kenn91}
  R.~Kennes, Computational aspects of the Moebius transform of a graph,
  \emph{IEEE Transactions on Systems, Man, and Cybernetics} 22 (1991), 
  201--223.

\bibitem{KoGK77}
  S.~Kohn, A~.Gottlieb, M.~Kohn, A generating function
  approach to the traveling salesman problem,
  ACM Annual Conference (ACM 1977), ACM Press, 1977, pp.~294--300.

\bibitem{Kout08}
  I.~Koutis,
  Faster algebraic algorithms for path and packing problems,
  35th International Colloquium on Automata, Languages and Programming 
  (ICALP 2008), Part I, LNCS~5125, Springer, 2008, pp.~575--586.

\bibitem{KMRR06}
  J.~Kneis, D.~M\"olle, S.~Ricther, P.~Rossmanith,
  Divide-and-color,
  32nd International Workshop on Graph-Theoretic Concepts 
  in Computer Science (WG 2006), LNCS 4271, Springer, 2006, pp.~58--67.
  
\bibitem{LLRS85}
  E.~L.~Lawler, J.~K.~Lenstra, 
  A.~H.~G.~Rinnooy Kan, D.~B.~Shmoys (Eds.),
  {\em The Traveling Salesman Problem: 
  A Guided Tour of Combinatorial Optimization},
  Wiley, 1985.

\bibitem{Moni85}
  B.~Monien, How to find long paths efficiently, 
  {\em Ann.\ Discrete Math.}\ 25 (1985), 239--254.

\bibitem{West01}  
  D.~B.~West, {\em Introduction to Graph Theory}, 2nd ed.,
  Prentice--Hall, 2001.

\bibitem{Will08}
  R.~Williams, Finding paths of length $k$ in $O^*(2^k)$ time,
  arXiv:0807.3026, July 2008.

\bibitem{Woeg03} 
  G.~J.~Woeginger, Exact algorithms for NP-hard problems: A survey, 
  Combinatorial Optimization -- Eureka, You Shrink! 
  LNCS~2570, Springer, 2003, pp.~185--207.

\bibitem{Yate37}
  F.~Yates, {\em The Design and Analysis of Factorial Experiments}, Technical
  Communication~35, Commonwealth Bureau of Soils, Harpenden, U.K., 1937.

\end{thebibliography}
\end{document}